\newcommand{\argmin}{\operatorname*{arg \ min}}
\theoremstyle{plain}
\newtheorem{prop}{Proposition}
\title{Indirect multivariate response linear regression}
\author{
  Aaron J. Molstad \\
  {\small School of Statistics}\\
  {\small University of Minnesota}\\
  {\small \texttt{molst029@umn.edu}}
  \and
  Adam J. Rothman\\
  {\small School of Statistics}\\
  {\small University of Minnesota}\\
  {\small \texttt{arothman@umn.edu}}\\
}
\date{July 16, 2015}
\begin{document}

\setlength{\pdfpageheight}{\paperheight}
\setlength{\pdfpagewidth}{\paperwidth}
 
\maketitle

\begin{abstract}
We propose a new class of estimators of the multivariate response linear regression 
coefficient matrix that exploits the assumption that the response and 
predictors have a joint multivariate Normal distribution. This allows 
us to indirectly estimate the regression coefficient matrix through 
shrinkage estimation of the parameters of the inverse regression, 
or the conditional distribution of the predictors given the responses. 
We establish a convergence rate bound for estimators in our class
and we study two examples. The first example estimator exploits an 
assumption that the inverse regression's coefficient matrix is sparse. 
The second example estimator exploits an assumption that the inverse regression's coefficient 
matrix is rank deficient. These estimators do not require the popular 
assumption that the forward regression coefficient matrix is sparse or has 
small Frobenius norm.  Using simulation studies, we show that our example estimators
outperform relevant competitors for some data generating models. 
\end{abstract}

\section{Introduction} \label{sec:intro}
Some statistical applications require the modeling of a multivariate response.
Let $y_i \in \mathbb{R}^q$ be the measurement of the $q$-variate response for 
the $i$th subject and let $x_i \in \mathbb{R}^p$ be the nonrandom values 
of the $p$ predictors for the $i$th subject ($i=1,\dots,n)$.
The multivariate response linear regression model assumes that $y_i$ is a realization
of the random vector
\begin{equation} \label{normalmodel}
Y_i = \mu_{*} + \beta_{*}' x_i + \varepsilon_i, \quad i=1,\ldots, n,
 \end{equation}
where $\mu_{*} \in \mathbb{R}^q$ is the unknown intercept, 
$\beta_{*}$ is the unknown $p$ by $q$ regression coefficient matrix,
and $\varepsilon_1, \ldots, \varepsilon_{n}$ are independent copies of a mean zero
random vector with covariance matrix $\Sigma_{*E}$.  

The ordinary least squares estimator of $\beta_{*}$ is 
\begin{equation} \label{olsopt}
\hat \beta^{({\rm OLS})} = \argmin_{\beta\in\mathbb{R}^{p\times q}} \|\mathbb{Y} - \mathbb{X}\beta\|_F^2, 
\end{equation}
where $\|\cdot\|_F$ is the Frobenius norm, $\mathbb{R}^{p\times q}$ is the set of real valued $p$ by $q$
matrices, $\mathbb{Y}$ is the $n$ by $q$ matrix with $i$th row 
$(Y_i - n^{-1}\sum_{i=1}^n Y_i)'$, and $\mathbb{X}$ is the $n$ by $p$ matrix with 
$i$th row $(x_i - n^{-1}\sum_{i=1}^n x_i)'$ ($i=1,...,n$).  
It is well known that $\hat \beta^{({\rm OLS})}$ is the maximum likelihood estimator of $\beta_{*}$
when $\varepsilon_1, \ldots, \varepsilon_{n}$ are independent and identically distributed  $N_{q}(0,\Sigma_{*E})$ and 
the corresponding maximum likelihood estimator of $\Sigma_{*E}^{-1}$ exists.

Many shrinkage estimators of $\beta_{*}$ have been proposed by 
penalizing the optimization in \eqref{olsopt}.  Some of these estimators simultaneously
estimate $\beta_{*}$ and remove irrelevant predictors 
\citep{turlach2005simultaneous, obozinski2010joint, peng2010regularized}.
Others encourage an estimator of reduced rank 
\citep{yuan2007dimension, chen2012sparse}.

Under the restriction that $\varepsilon_1, \ldots, \varepsilon_{n}$ are independent 
and identically distributed $N_{q}(0,\Sigma_{*E})$,
shrinkage estimators of $\beta_{*}$ that penalize or constrain the minimization of the
negative loglikelihood have been proposed.  These methods simultaneously estimate 
$\beta_{*}$ and $\Sigma_{*E}^{-1}$.  Examples include 
maximum likelihood reduced rank regression 
\citep{izenman1975reduced, reinsel1998multivariate}, 
envelope models \citep{clc:09, sucook11,sucook12, sucook13},
and multivariate regression 
with covariance estimation \citep{rothman2010sparse,  lee12, bhadra2013joint}.

To fit \eqref{normalmodel} with these shrinkage estimators, 
one exploits explicit assumptions about $\beta_{*}$, but these may be unreasonable
in some applications.  As an alternative, we propose an indirect method
to fit \eqref{normalmodel} without making explicit assumptions about $\beta_{*}$.
We exploit the assumption that response and predictors have a joint 
multivariate Normal distribution and we employ shrinkage estimators of the 
parameters of the conditional distribution of the
predictors given the response. Our method provides an alternative indirect
estimator of $\beta_{*}$, which may be suitable when the existing shrinkage
estimators are inadequate.

\section{A new class of indirect estimators of $\beta_{*}$}
\subsection{Class definition}
We assume that the measured predictor and response pairs
$(x_1, y_1),\dots,(x_n, y_n)$  are a realization of $n$ 
independent copies of $(X,Y)$, where
$(X',Y')' \sim N_{p+q}(\mu_{*}, \Sigma_{*})$. 
We also assume that $\Sigma_{*}$ positive definite.
Define the marginal parameters through the following partitions:
$$
\mu_{*} = 
\left(\begin{array}{c}
\mu_{*X}\\
\mu_{*Y}
      \end{array}
\right),
\quad
\Sigma_{*}=
\left(\begin{array}{cc}
       \Sigma_{*XX}& \Sigma_{*XY}\\
       \Sigma_{*XY}'& \Sigma_{*YY}
      \end{array}
\right).
$$
Our goal is to estimate
the multivariate regression coefficient matrix
$\beta_{*} = \Sigma_{*XX}^{-1} \Sigma_{*XY}$ in the forward regression model
$$
(Y|X=x) \sim N_q( \mu_{*Y} + \beta_{*}'(x-\mu_{*X}), \Sigma_{*E} ),
$$
without assuming that $\beta_{*}$ is sparse or that $\|\beta_{*}\|_{F}^2$ is small.
To do this we will estimate the inverse regression's coefficient matrix 
$\eta_{*} = \Sigma_{YY}^{-1} \Sigma_{XY}'$ and the 
inverse regression's error precision matrix $\Delta_{*}^{-1}$ in
the inverse regression model
$$
(X|Y=y) \sim N_{p}( \mu_{*X} + \eta_{*}'(y-\mu_{*Y}) , \Delta_{*}).
$$
We connect the parameters of
 the inverse regression model to $\beta_{*}$ with the following proposition.
\begin{prop}\label{main.prop}
If $\Sigma_{*}$ is positive definite, then
\begin{equation} 
\beta_* = \Delta_*^{-1}\eta_*^\prime\left(\Sigma_{*YY}^{-1} + \eta_*\Delta_*^{-1}\eta_*^\prime\right)^{-1}.
\end{equation}
\end{prop}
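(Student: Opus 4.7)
The plan is to express $\Sigma_{*XX}^{-1}$ in terms of the inverse regression's parameters via the Woodbury identity, then multiply by $\Sigma_{*XY}$ and simplify.

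First I would record the basic identities connecting the two parametrizations. Since $\eta_* = \Sigma_{*YY}^{-1}\Sigma_{*XY}'$, we have $\Sigma_{*XY} = \eta_*'\Sigma_{*YY}$. Moreover, the conditional covariance from the inverse regression is $\Delta_* = \Sigma_{*XX} - \Sigma_{*XY}\Sigma_{*YY}^{-1}\Sigma_{*XY}'$, which rearranges to
\begin{equation*}
\Sigma_{*XX} = \Delta_* + \eta_*' \Sigma_{*YY} \eta_*.
\end{equation*}
Both $\Delta_*$ and $\Sigma_{*YY}$ are positive definite because $\Sigma_*$ is positive definite, so all relevant inverses exist.

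Next I would apply the Woodbury identity with $A = \Delta_*$, $U = \eta_*'$, $C = \Sigma_{*YY}$, $V = \eta_*$ to obtain
\begin{equation*}
\Sigma_{*XX}^{-1} = \Delta_*^{-1} - \Delta_*^{-1}\eta_*'\bigl(\Sigma_{*YY}^{-1} + \eta_*\Delta_*^{-1}\eta_*'\bigr)^{-1}\eta_*\Delta_*^{-1}.
\end{equation*}
Substituting this into $\beta_* = \Sigma_{*XX}^{-1}\Sigma_{*XY} = \Sigma_{*XX}^{-1}\eta_*'\Sigma_{*YY}$ gives
\begin{equation*}
\beta_* = \Delta_*^{-1}\eta_*'\Bigl[I - \bigl(\Sigma_{*YY}^{-1} + \eta_*\Delta_*^{-1}\eta_*'\bigr)^{-1}\eta_*\Delta_*^{-1}\eta_*'\Bigr]\Sigma_{*YY}.
\end{equation*}

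The final step is the algebraic simplification of the bracketed expression. Writing $M = \Sigma_{*YY}^{-1} + \eta_*\Delta_*^{-1}\eta_*'$, we have $\eta_*\Delta_*^{-1}\eta_*' = M - \Sigma_{*YY}^{-1}$, so $I - M^{-1}(M - \Sigma_{*YY}^{-1}) = M^{-1}\Sigma_{*YY}^{-1}$. The trailing $\Sigma_{*YY}$ cancels against $\Sigma_{*YY}^{-1}$, yielding the claimed formula. There is no real obstacle here; the only thing to be careful about is justifying the invertibility of $\Delta_*$ and of $M$, both of which follow from the positive definiteness of $\Sigma_*$. An alternative route would be to multiply the conjectured right-hand side by $\Sigma_{*XX}$ on the left and verify it reduces to $\Sigma_{*XY}$, but the Woodbury derivation above seems cleaner and more illuminating.
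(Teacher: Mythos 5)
Your proof is correct, but it takes a genuinely different route from the paper's. The paper first applies the partitioned inverse formula to $\Sigma_*$ and reads off the blocks of $\Sigma_*^{-1}$; the symmetry of the precision matrix then gives the identity $\beta_*\Sigma_{*E}^{-1} = (\eta_*\Delta_*^{-1})'$, i.e.\ $\beta_* = \Delta_*^{-1}\eta_*'\Sigma_{*E}$, and a single application of Woodbury to $\Sigma_{*E}^{-1} = (\Sigma_{*YY} - \Sigma_{*XY}'\Sigma_{*XX}^{-1}\Sigma_{*XY})^{-1}$ yields $\Sigma_{*E}^{-1} = \Sigma_{*YY}^{-1} + \eta_*\Delta_*^{-1}\eta_*'$ and hence the result. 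You instead start from $\beta_* = \Sigma_{*XX}^{-1}\Sigma_{*XY}$, apply Woodbury to $\Sigma_{*XX} = \Delta_* + \eta_*'\Sigma_{*YY}\eta_*$, and finish with the algebraic cancellation $I - M^{-1}(M - \Sigma_{*YY}^{-1}) = M^{-1}\Sigma_{*YY}^{-1}$; all of your intermediate identities and the invertibility claims check out, with $M$ positive definite as the sum of a positive definite and a positive semidefinite matrix. What the paper's route buys is the explicit duality between the forward and inverse regressions (the off-diagonal block symmetry) and, more practically, the standalone identity $\Sigma_{*E}^{-1} = \Sigma_{*YY}^{-1} + \eta_*\Delta_*^{-1}\eta_*'$, which is reused verbatim in the proof of Proposition 2; your route is more self-contained and elementary, requiring no partitioned-inverse machinery, at the cost of a slightly longer final simplification and of not producing that reusable byproduct.
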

We prove Proposition \ref{main.prop} in Appendix \ref{proofs}. 
This result leads us to propose a class of 
estimators of $\beta_*$ defined by  
\begin{equation}\label{inv_estimator}
\hat{\beta} = \hat{\Delta}^{-1}\hat{\eta}^\prime(\hat{\Sigma}_{YY}^{-1} 
+ \hat{\eta}\hat{\Delta}^{-1}\hat{\eta}^\prime)^{-1}, 
\end{equation}
where $\hat{\eta}$, $\hat{\Delta}$, and $\hat{\Sigma}_{YY}$ are
user-selected estimators of $\eta_*$, $\Delta_*$, and $\Sigma_{*YY}$. 
If $n > \max(p, q)$ and the ordinary sample estimators 
are used for $\hat{\eta}$, $\hat{\Delta}$ and $\hat{\Sigma}_{YY}$, then
$\hat{\beta}$ is equivalent to $\hat \beta^{({\rm OLS})}$. 

We propose to use shrinkage estimators of $\eta_*$, $\Delta_{*}^{-1}$, and $\Sigma_{*YY}^{-1}$
in \eqref{inv_estimator}.  This gives us the potential to indirectly fit an unparsimonious forward
regression model by fitting a parsimonious inverse regression model.
For example, suppose that $\eta_*$ and $\Delta_{*}^{-1}$ are sparse, 
but $\beta_*$ is dense. To fit the inverse regression model, we could 
use any of the forward regression shrinkage estimators discussed in Section \ref{sec:intro}.

\subsection{Related work}
\citet{lee12} proposed an estimator of $\beta_{*}$ that also exploits 
the assumption that  $(X', Y')'$ is multivariate Normal; however,  unlike our approach
that makes no explicit assumptions about $\beta_*$,
their approach assumes that both $\Sigma_{*}^{-1}$ and $\beta_{*}$ are sparse.

Modeling the inverse regression is a well-known idea in multivariate analysis.  For example,
when $Y$ is categorical, quadratic discriminant analysis models $(X|Y=y)$ 
as $p$-variate Normal.  There are also many examples of modeling the inverse regression  
in the sufficient dimension reduction literature \citep{adragnicook2009}.

The most closely related work to ours is that by \citet{cook2013prediction}. 
They proposed indirect estimators of $\beta_{*}$ based on modeling 
the inverse regression in the special
case when the response is univariate, i.e. $q=1$.  Under the same multivariate
Normal assumption on $(X', Y')'$ that we make, 
\citet{cook2013prediction} showed that 
\begin{equation}\label{cfr}
\beta_{*} = \frac{1}{1+\Sigma_{*XY}'\Delta_{*}^{-1}\Sigma_{*XY}/\Sigma_{*YY}}\Delta_{*}^{-1} \Sigma_{*XY}.
\end{equation}
They proposed estimators of $\beta_{*}$ by replacing 
$\Sigma_{*XY}$ and $\Sigma_{*YY}$ in the right hand side of \eqref{cfr} with
their usual sample estimators, and by replacing $\Delta_{*}^{-1}$ with a shrinkage
estimator.  This class of estimators was designed to exploit an abundant
signal rate in the forward univariate response regression when $p > n$. 

\section{Asymptotic Analysis}
We present a convergence rate bound 
for the indirect estimator of $\beta_{*}$
defined by \eqref{inv_estimator}.  
Our bound allows $p$ and $q$ to grow with the sample size $n$. 
In the following proposition, $\| \cdot \|$ is the spectral norm
and $\varphi_{\min}(\cdot)$ is the minimum eigenvalue.

\begin{prop}\label{Proposition1} 
Suppose that following conditions are true:
(i)  $\Sigma_{*}$ is positive definite for all $p+q$;
(ii) the estimator $\hat{\Sigma}_{YY}^{-1}$ is positive definite
for all $q$; 
(iii) the estimator $\hat\Delta^{-1}$ is positive definite for all $p$;
(iv) there exists a positive constant $K$ such that $\varphi_{\min}(\Sigma_{*YY}^{-1}) \geq K$
for all $q$;
and 
(v) there exist sequences $\{a_n\}, \{b_n\}$ and $\{c_n\}$ such that 
$\|\hat\eta - \eta_*\| = O_P(a_n)$,
$\|\hat{\Delta}^{-1} - \Delta_*^{-1} \| = O_P(b_n)$,
$\|\hat\Sigma_{YY}^{-1}  - \Sigma_{*YY}^{-1}\| = O_P(c_n)$,
and $a_n \|\eta_*\| \cdot \|\Delta_*^{-1}\| 
+ b_n  \|\eta_*\|^2  
+ c_n \rightarrow 0$ as $n\rightarrow \infty$.
Then 
\begin{align*}
\|\hat\beta - \beta_*\| = 
O_P \left(
a_n \|\eta_*\|^2 \|\Delta_*^{-1}\|^2 
+ b_n  \|\eta_*\|^3\|\Delta_*^{-1}\|  
+ c_n \|\eta_*\| \cdot \|\Delta_*^{-1}\|   \right).
\end{align*}
\end{prop}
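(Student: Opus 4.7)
The plan is to rewrite both $\hat\beta$ and $\beta_{*}$ in the common form $\Delta^{-1}\eta'M^{-1}$, where $M_{*}=\Sigma_{*YY}^{-1}+\eta_{*}\Delta_{*}^{-1}\eta_{*}'$ and $\hat M=\hat\Sigma_{YY}^{-1}+\hat\eta\hat\Delta^{-1}\hat\eta'$, and then bound the difference via a telescoping decomposition in the spectral norm. Proposition~\ref{main.prop} gives $\beta_{*}=\Delta_{*}^{-1}\eta_{*}'M_{*}^{-1}$, so the natural split is
\[
\hat\beta-\beta_{*}=\bigl(\hat\Delta^{-1}\hat\eta'-\Delta_{*}^{-1}\eta_{*}'\bigr)\hat M^{-1}+\Delta_{*}^{-1}\eta_{*}'\bigl(\hat M^{-1}-M_{*}^{-1}\bigr),
\]
which by submultiplicativity reduces the problem to controlling (a) $\|\hat\Delta^{-1}\hat\eta'-\Delta_{*}^{-1}\eta_{*}'\|$, (b) $\|\hat M^{-1}\|$, and (c) $\|\hat M^{-1}-M_{*}^{-1}\|$.

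For (a), I would expand
\[
\hat\Delta^{-1}\hat\eta'-\Delta_{*}^{-1}\eta_{*}'=(\hat\Delta^{-1}-\Delta_{*}^{-1})\hat\eta'+\Delta_{*}^{-1}(\hat\eta-\eta_{*})',
\]
and use $\|\hat\eta\|\le\|\eta_{*}\|+O_P(a_n)$ together with assumption~(v) to obtain $O_P(b_n\|\eta_{*}\|+a_n\|\Delta_{*}^{-1}\|)$. The analogous three-term expansion of $\hat\eta\hat\Delta^{-1}\hat\eta'-\eta_{*}\Delta_{*}^{-1}\eta_{*}'$ yields
\[
\|\hat M-M_{*}\|=O_P\bigl(c_n+a_n\|\eta_{*}\|\|\Delta_{*}^{-1}\|+b_n\|\eta_{*}\|^{2}\bigr),
\]
which tends to zero by the last condition in (v).

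For (b), since $\eta_{*}\Delta_{*}^{-1}\eta_{*}'\succeq 0$, we have $\varphi_{\min}(M_{*})\ge\varphi_{\min}(\Sigma_{*YY}^{-1})\ge K$ by (iv), hence $\|M_{*}^{-1}\|\le 1/K$. Weyl's inequality together with the rate $\|\hat M-M_{*}\|\to 0$ in probability then gives $\varphi_{\min}(\hat M)\ge K/2$ on an event of probability tending to one, so $\|\hat M^{-1}\|=O_P(1)$. For (c), I would invoke the identity $\hat M^{-1}-M_{*}^{-1}=\hat M^{-1}(M_{*}-\hat M)M_{*}^{-1}$ and plug in the bound on $\|\hat M-M_{*}\|$ just obtained.

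Assembling the pieces, the dominant contribution comes from (c):
\[
\|\Delta_{*}^{-1}\eta_{*}'\|\cdot\|\hat M^{-1}-M_{*}^{-1}\|=O_P\bigl(\|\Delta_{*}^{-1}\|\|\eta_{*}\|\cdot(a_n\|\eta_{*}\|\|\Delta_{*}^{-1}\|+b_n\|\eta_{*}\|^{2}+c_n)\bigr),
\]
which matches the stated rate. The main obstacle is the uniform control of $\|\hat M^{-1}\|$ in step~(b): without a positive floor on $\varphi_{\min}(M_{*})$, the perturbation $\hat M^{-1}-M_{*}^{-1}$ would be unstable. Assumption~(iv) supplies exactly this floor via $M_{*}\succeq\Sigma_{*YY}^{-1}\succeq KI_{q}$, and the convergence condition in (v) ensures this stability is inherited by $\hat M$.
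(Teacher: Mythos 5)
Your proposal is correct and follows essentially the same route as the paper's proof: both reduce the problem to perturbing $\Sigma_{*E}=(\Sigma_{*YY}^{-1}+\eta_*\Delta_*^{-1}\eta_*')^{-1}$, bound $\|\hat M-M_*\|$ by expanding the three-factor product, pass to $\|\hat M^{-1}-M_*^{-1}\|$ via the identity $\hat M^{-1}-M_*^{-1}=\hat M^{-1}(M_*-\hat M)M_*^{-1}$, and use Weyl's inequality with assumption (iv) to get $\|M_*^{-1}\|\le 1/K$ and $\|\hat M^{-1}\|=O_P(1)$. The only cosmetic difference is that you use a two-term add-and-subtract decomposition of $\hat\beta-\beta_*$ where the paper applies the Bickel--Levina three-factor inequality, and your identification of the dominant term (and the resulting rate) coincides with the paper's.
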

We prove Proposition \ref{Proposition1} in Appendix \ref{proofs}. We used the
spectral norm because it is compatible with the convergence rate bounds established 
for sparse inverse covariance estimators \citep{rothman2008sparse, lam07, ravi08}. 

If the inverse regression is parsimonious in the sense that $\|\eta_{*}\|$
and $\|\Delta_{*}^{-1}\|$ are bounded, then the bound in Proposition \ref{Proposition1} simplifies to 
$\|\hat\beta - \beta_*\| = O_P( a_n + b_n + c_n)$.  From an asymptotic perspective, it is not surprising 
that the indirect estimator of $\beta_{*}$ is only as good as its worst plug-in estimator.  We explore 
finite sample performance in Section \ref{section:sims}.

\section{Example estimators in our class}

\subsection{Sparse inverse regression}\label{sparsesec}
We now describe an estimator of the forward regression coefficient matrix 
$\beta_*$ defined by \eqref{inv_estimator}
that exploits zeros in the inverse regression's coefficient matrix $\eta_{*}$,
zeros in the inverse regression's error precision matrix $\Delta_{*}^{-1}$, 
and zeros in the precision matrix of the responses $\Sigma_{*YY}^{-1}$.  
We estimate $\eta_{*}$ with
\begin{equation} \label{eq:unilasso}
\hat\eta^{{\rm L1}} = \argmin_{\eta\in\mathbb{R}^{q\times p}} 
\left\{ \|\mathbb{X} - \mathbb{Y}\eta\|_{F}^2 
+ \sum_{j=1}^p \lambda_j \sum_{m=1}^q |\eta_{mj}| \right\},
\end{equation}
which separates into $p$ $L_1$-penalized least-squares regressions \citep{tibs96}: the 
first predictor regressed on the response through the $p$th predictor regressed on 
the response.  We select $\lambda_j$ with 5-fold cross-validation, minimizing 
squared prediction error totaled over the folds, 
in the regression of the $j$th predictor on the response 
$(j=1,\ldots, p)$. This allows us to estimate the columns of $\eta_{*}$ in parallel.

We estimate $\Delta_{*}^{-1}$  and $\Sigma_{*YY}^{-1}$ with $L_1$-penalized
Normal likelihood precision matrix estimation \citep{yuan2007model, banerjee06}.
Let $\hat\Sigma_{\gamma, S}^{-1}$ be a generic version of this estimator
with tuning parameter $\gamma$ and input $p$ by $p$ sample covariance matrix $S$:
\begin{equation} \label{glasso}
\hat\Sigma_{\gamma, S}^{-1} = \argmin_{\Omega\in\mathbb{S}_{+}^{p}} 
\left\{ {\rm tr}(\Omega S) - \log |\Omega| + 
\gamma\sum_{j\neq k} |\omega_{jk}|\right\}, 
\end{equation}
where $\mathbb{S}_{+}^{p}$ is the set of symmetric and positive definite $p$ by $p$ matrices.
There are many algorithms that solve \eqref{glasso}. 
Two good choices are the graphical lasso algorithm \citep{yuan2008, fht08} 
and the QUIC algorithm \citep{hsieh11}.  
We select $\gamma$ with
5-fold cross-validation maximizing a validation 
likelihood criterion \citep{huang06}:
\begin{equation} \label{deltatune}
\hat\gamma = \argmin_{\gamma\in\mathcal{G}} 
\sum_{k=1}^5 \left\{ {\rm tr}\left(\hat\Sigma_{\gamma, S_{(-k)}}^{-1} S_{(k)} \right)
 - \log \left| \hat\Sigma_{\gamma, S_{(-k)}}^{-1} \right| \right\},
\end{equation}
where $\mathcal{G}$ is 
a user-selected finite subset of the non-negative real line,
$S_{(-k)}$ is the sample covariance matrix from the observations outside
the $k$th fold, and $S_{(k)}$ is the sample covariance matrix from the observations
in the $k$th fold centered by the sample mean of the observations outside the $k$th fold.
We estimate $\Delta_{*}^{-1}$ using \eqref{glasso}
with its tuning parameter selected by \eqref{deltatune}
and $S = (\mathbb{X} - \mathbb{Y}\hat\eta^{{\rm L1}})'
(\mathbb{X} - \mathbb{Y}\hat\eta^{{\rm L1}})/n$.
Similarly, we estimate $\Sigma_{*YY}^{-1}$  using \eqref{glasso}
with its tuning parameter selected by \eqref{deltatune}
and $S = \mathbb{Y}'\mathbb{Y}/n$.

\subsection{Reduced rank inverse regression}\label{redranksec}
We propose indirect estimators of $\beta_{*}$ that exploit
the assumption that the inverse regression's coefficient matrix $\eta_{*}$ is rank deficient.
We have the following simple proposition that links rank deficiency in $\eta_{*}$ and its estimator
to $\beta_{*}$ and its indirect estimator.
\begin{prop} \label{prop.rr.pop}
If $\Sigma_{*}$ is positive definite, then ${\rm rank}(\beta_*)= {\rm rank}(\eta_*)$.  
In addition, if $\hat{\Sigma}_{YY}^{-1}$ and $\hat{\Delta}^{-1}$ are positive
definite in the indirect estimator $\hat\beta$ defined by \eqref{inv_estimator},
then ${\rm rank}(\hat{\beta}) = {\rm rank}(\hat{\eta})$.
\end{prop}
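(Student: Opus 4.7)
The plan is to use the closed-form expression from Proposition~\ref{main.prop} and the elementary fact that left- or right-multiplication by a nonsingular matrix preserves rank. Since Proposition~\ref{main.prop} applies under the positive definiteness of $\Sigma_*$, I get
$$
\beta_* = \Delta_*^{-1}\,\eta_*'\, M_*^{-1}, \qquad M_* := \Sigma_{*YY}^{-1} + \eta_*\Delta_*^{-1}\eta_*',
$$
so the first claim will follow as soon as I verify that $\Delta_*^{-1}$ and $M_*^{-1}$ are nonsingular.

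First I would argue that $\Delta_*^{-1}$ is positive definite: this is immediate from the positive definiteness of $\Sigma_*$, because $\Delta_*$ is the conditional covariance of $X$ given $Y$ under the joint Normal model. Next I would show that $M_*$ is positive definite: $\Sigma_{*YY}^{-1}$ is positive definite (again from positive definiteness of $\Sigma_*$), and $\eta_*\Delta_*^{-1}\eta_*'$ is positive semidefinite since it is of the form $A B A'$ with $B \succ 0$. The sum is therefore positive definite, and so $M_*^{-1}$ exists and is positive definite. Multiplying $\eta_*'$ on the left by the invertible $p \times p$ matrix $\Delta_*^{-1}$ and on the right by the invertible $q \times q$ matrix $M_*^{-1}$ leaves the rank unchanged, giving $\mathrm{rank}(\beta_*) = \mathrm{rank}(\eta_*') = \mathrm{rank}(\eta_*)$.

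The second claim is handled by the identical argument applied to the sample-level identity \eqref{inv_estimator}. The assumptions that $\hat{\Sigma}_{YY}^{-1}$ and $\hat{\Delta}^{-1}$ are positive definite immediately yield that $\hat{\eta}\hat{\Delta}^{-1}\hat{\eta}'$ is positive semidefinite and hence that $\hat{M} := \hat{\Sigma}_{YY}^{-1} + \hat{\eta}\hat{\Delta}^{-1}\hat{\eta}'$ is positive definite and invertible. Then $\hat{\beta} = \hat{\Delta}^{-1}\hat{\eta}'\hat{M}^{-1}$ is obtained from $\hat{\eta}'$ by pre- and post-multiplication by nonsingular matrices, so $\mathrm{rank}(\hat{\beta}) = \mathrm{rank}(\hat{\eta})$.

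There is no real obstacle here; the entire argument is a one-line consequence of Proposition~\ref{main.prop} once the invertibility of the two sandwiching matrices is checked. The only place to be careful is in noting that the $\eta_*\Delta_*^{-1}\eta_*'$ term is only positive semidefinite (it can be singular when $\eta_*$ is rank deficient), so that positive definiteness of $M_*$ genuinely requires the $\Sigma_{*YY}^{-1}$ summand; the analogous remark applies at the sample level.
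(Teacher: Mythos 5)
Your argument is correct and is evidently the one the authors had in mind: the paper omits the proof as ``simple,'' and the intended route is exactly yours---apply Proposition~\ref{main.prop} (resp.\ the definition \eqref{inv_estimator}), check that $\Delta_*^{-1}$ and $\Sigma_{*YY}^{-1} + \eta_*\Delta_*^{-1}\eta_*'$ (resp.\ their sample analogues) are positive definite, and invoke invariance of rank under multiplication by nonsingular matrices. Your care in noting that $\eta_*\Delta_*^{-1}\eta_*'$ is only positive semidefinite, so that the $\Sigma_{*YY}^{-1}$ summand is what guarantees invertibility, is exactly the right point to flag.
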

\noindent
The proof of this proposition is simple so we excluded it to save space. 

We propose the following two example reduced rank indirect estimators of $\beta_{*}$:
\begin{enumerate}
 \item Estimate $\Sigma_{*YY}$ with $\mathbb{Y}'\mathbb{Y}/n$
 and estimate $(\eta_{*}, \Delta_{*}^{-1})$ with Normal 
likelihood reduced rank inverse regression:
\begin{align} \label{inv.RR}
(\hat{\eta}^{(r)}, \hat\Delta^{-1(r)}) 
= \argmin_{(\eta, \Omega) \in\mathbb{R}^{q\times p}\times \mathbb{S}_{+}^p} & 
\left[
n^{-1}{\rm tr}\left\{(\mathbb{X} - \mathbb{Y}\eta)'(\mathbb{X} - \mathbb{Y}\eta)\Omega\right\}
 - \log {\rm det} (\Omega )
\right] \\
& {\rm subject} \ {\rm to} \ {\rm rank}(\eta) = r, \nonumber
\end{align}
where $r$ is selected from $\{0,\ldots, \min(p, q)\}$.  The solution to the optimization
in \eqref{inv.RR} is available in closed form \citep{reinsel1998multivariate}.  
\item Estimate $\eta_{*}$ with $\hat{\eta}^{(r)}$ defined in \eqref{inv.RR},   
estimate  $\Sigma_{*YY}^{-1}$ with \eqref{glasso}
using $S = \mathbb{Y}'\mathbb{Y}/n$, and estimate 
$\Delta_*^{-1}$ with \eqref{glasso}
using
$S = (\mathbb{X} - \mathbb{Y}\hat\eta^{(r)} )^\prime (\mathbb{X} - \mathbb{Y}\hat\eta^{(r)} )/n$.
\end{enumerate}
Both example indirect reduced rank estimators of $\beta_{*}$ are formed by plugging
in the estimators of $\eta_{*}, \Delta_{*}^{-1}$, and $\Sigma_{*YY}$ 
to \eqref{inv_estimator}.  The first 
estimator is likelihood-based and the second estimator exploits 
sparsity in $\Sigma_{*YY}^{-1}$ and $\Delta_{*}^{-1}$.
Neither estimator is defined when $\min(p, q) > n$.  In this case, which we do not address,
a regularized reduced rank estimator of $\eta_*$ could be used instead of 
the estimator defined in \eqref{inv.RR},
e.g. the factor estimation and selection estimator \citep{yuan2007dimension} or 
the reduced rank ridge regression estimator \citep{mukherjee2011reduced}.

\section{Simulations} \label{section:sims}
\subsection{Sparse inverse regression simulation}\label{sparsesim}
We compared the following indirect estimators of $\beta_{*}$ when the inverse regression's
coefficient matrix $\eta_{*}$ is sparse:
\begin{list}{}{}
\item[$I_{L1}$.] This is the indirect estimator proposed in Section \ref{sparsesec}.  
\item[$I_{S}$.] This is an indirect estimator defined by \eqref{inv_estimator}  
with $\hat{\eta}$ defined by \eqref{eq:unilasso}, 
$\hat{\Sigma}_{YY} = \mathbb{Y}^\prime\mathbb{Y}/n$, 
and $\hat{\Delta} = (\mathbb{X} - \mathbb{Y}\hat{\eta}^{L1})^\prime(\mathbb{X} - \mathbb{Y}\hat{\eta}^{L1})/n$. 
\item[$O_{\Delta}$.] This is a part oracle indirect estimator defined by \eqref{inv_estimator} 
with $\hat{\eta}$ defined by \eqref{eq:unilasso}, 
$\hat{\Sigma}_{YY}^{-1}$ defined by  \eqref{glasso}, 
and $\hat{\Delta}^{-1} = \Delta^{-1}_*$.
\item[$O$.] This is a part oracle indirect estimator defined by \eqref{inv_estimator}  
with $\hat{\eta}$ defined by \eqref{eq:unilasso}, 
$\hat{\Sigma}_{YY}^{-1} = \Sigma_{*YY}^{-1}$, 
and $\hat{\Delta}^{-1} = \Delta^{-1}_*$.
\item[$O_{Y}$.]  This is a part oracle indirect estimator defined by \eqref{inv_estimator} 
with $\hat{\eta}$ defined by \eqref{eq:unilasso}, 
$\hat{\Sigma}_{YY}^{-1} = \Sigma_{*YY}^{-1}$, 
and $\hat{\Delta}^{-1}$ defined by  \eqref{glasso}.
\end{list}
We also included the following forward regression estimators of $\beta_{*}$:
\begin{list}{}{}
\item[OLS/MP.] This is the ordinary least squares estimator defined by
$\argmin_{\beta\in\mathbb{R}^{p\times q}} \|\mathbb{Y} - \mathbb{X}\beta\|_{F}^2$.
When $n \leq p$, we use the solution $\mathbb{X}^{-}\mathbb{Y}$, where $\mathbb{X}^{-}$ 
is the Moore-Penrose generalized inverse of $\mathbb{X}$.
\item[R.] This is the ridge penalized least squares estimator defined by
$$
\argmin_{\beta\in\mathbb{R}^{p\times q}}\left( \|\mathbb{Y} - \mathbb{X}\beta\|_{F}^2 + \lambda \|\beta\|_{F}^2 \right).
$$
\item[$\ell_2$.] This is an alternative ridge penalized least squares estimator defined by 
$$
\argmin_{\beta\in\mathbb{R}^{p\times q}}\left( \|\mathbb{Y} - \mathbb{X}\beta\|_{F}^2 
+\sum_{m=1}^q \lambda_m \sum_{j=1}^p \beta_{jm}^2\right),
$$
where a separate tuning parameter is used for each response.
\end{list}
We selected the tuning parameters for uses of \eqref{eq:unilasso} with 5-fold cross-validation, minimizing validation prediction
error on the inverse regression.  Tuning parameters for $\ell_2$ and R were selected with 5-fold cross-validation, 
minimizing validation prediction error on the forward regression.   
We selected tuning parameters for uses of \eqref{glasso} with \eqref{deltatune}.
The candidate set of tuning parameters was  $\left\{10^{-8}, 10^{-7.5}, \dots, 10^{7.5},10^8\right\}$.

For 50 independent replications, we generated a realization of $n$ independent copies of 
$(X^\prime, Y^\prime)^\prime$, where $Y \sim N_q(0, \Sigma_{*YY})$ 
and $(X|Y=y) \sim N_p(\eta_*^\prime y, \Delta_*)$. 
The $(i,j)$th entry of $\Sigma_{*YY}$ was set to $\rho_Y^{|i-j|}$ 
and the $(i,j)$th entry of $\Delta_*$ was set to $\rho_{\Delta}^{|i-j|}$.
We set $\eta_* = Z \circ A$, where $\circ$ denotes the element-wise product:
$Z$ had entries independently drawn from $N(0,1)$ and $A$ had entries independently drawn 
from the Bernoulli distribution with nonzero probability $s_*$. 
This model is ideal for $I_{L1}$ because $\Delta_{*}^{-1}$ and $\Sigma_{*YY}^{-1}$
are both sparse.  Every entry in the corresponding randomly generated $\beta_*$ 
is nonzero with high probability, but the magnitudes of these
entries are small.  This motivated us to compare our indirect estimators of $\beta_{*}$
to the ridge-penalized least squares forward regression estimators R and $\ell_2$.

We evaluated performance with  
model error \citep{breiman1997predicting, yuan2007dimension}, 
which is defined by $\| \Sigma_{*XX}^{1/2} (\hat{\beta} - \beta_*)\|_F^2$.

\begin{table}[ht!]
\caption{ \label{table1} Averages of model error from 50 replications when $n=100,$ $p=20$, and $q=20$. 
All standard errors were less than or equal to 0.05.}
\centering
\begin{tabular}{ r r r |cccccccc}
  \hline
 $\rho_{Y}$ &  $\rho_{\Delta}$ &$s_*$ & $I_{L1}$ & $O$ & $O_{\Delta}$ & $O_{Y}$ & $I_S$ & OLS & $\ell_2$ & R  \\ 
  \hline
  0.7 & 0.0 & 0.1 & 0.61 & 0.32 & 0.53 & 0.40 & 1.35 & 2.10 & 1.23 & 1.22  \\
 0.7 & 0.5 & 0.1  & 0.72 & 0.39 & 0.59 & 0.51 & 1.30 & 1.91 & 1.29 & 1.30 \\ 
 0.7 & 0.7 & 0.1 & 0.76 & 0.45 & 0.65 & 0.56 & 1.27 & 1.73 & 1.27 & 1.29  \\
  0.7 & 0.9 & 0.1 & 0.83 & 0.66 & 0.85 & 0.64 & 1.26 & 1.35 & 1.05 & 1.09 \\
   \hline
  0.0 & 0.9 & 0.1 & 0.81 & 0.87 & 0.87 & 0.79 & 2.04 & 2.34 & 1.26 & 1.87 \\ 
 0.5 & 0.9 & 0.1& 0.96 & 0.76 & 0.99 & 0.74 & 1.63 & 1.84 & 1.36 & 1.49  \\ 
 0.9 & 0.9 & 0.1 & 0.46 & 0.39 & 0.47 & 0.36 & 0.63 & 0.62 & 0.48 & 0.48 \\ 
   \hline
   0.7 & 0.9 & 0.3& 0.60 & 0.53 & 0.65 & 0.46 & 0.83 & 0.67 & 0.64 & 0.63 \\ 
 0.7 & 0.9 & 0.5 & 0.48 & 0.37 & 0.48 & 0.37 & 0.65 & 0.53 & 0.52 & 0.51  \\ 
  0.7 & 0.9 & 0.7 & 0.42 & 0.29 & 0.39 & 0.31 & 0.55 & 0.46 & 0.45 & 0.44  \\ 
 \hline
\end{tabular}
\end{table}
We report the average model errors, based on these 50 replications, in Table \ref{table1}.
When $s_{*}=0.1$, the indirect estimators defined by \eqref{inv_estimator} 
performed well for all choices of $\rho_Y$ and $\rho_{\Delta}$. 
Our proposed estimator $I_{L1}$ was competitive 
with other indirect estimators also defined by \eqref{inv_estimator}, even those that
used some oracle information.  As $s_*$ increased with $\rho_Y = 0.7$ and $\rho_{\Delta} = 0.9$ 
fixed, the forward regression estimators performed nearly as well as $I_{L1}$.

\begin{table}[ht!]
\centering
\caption{\label{table2} 
Averages of model error from 50 replications when $n=50,$ $p=60$, and $q=60$. 
All standard errors were 0.69 or less,
except for MP, which had standard errors between 0.77 and 3.16.}
\begin{tabular}{ r r r |ccccccc}
  \hline
 $\rho_{Y}$ &  $\rho_{\Delta}$ &$s_*$ &  $I_{L1}$ & $O$  & $O_{\Delta}$ & $O_Y$ & MP & $\ell_2$ & R \\ 
    \hline
  0.7 & 0.0 & 0.1 & 8.59 & 4.28 & 5.70 & 7.40 & 78.33 & 13.85 & 12.44 \\ 
 0.7 & 0.5 & 0.1  & 9.67 & 5.09 & 6.37 & 8.49 & 73.82 & 14.79 & 13.34 \\ 
  0.7 & 0.7 & 0.1 & 10.01 & 6.37 & 7.44 & 8.75 & 70.30 & 15.56 & 14.40 \\ 
  0.7 & 0.9 & 0.1  & 9.92 & 10.07 & 11.44 & 8.88 & 61.83 & 16.43 & 15.94 \\ 
   \hline
  0.0 & 0.9 & 0.1 & 15.17 & 17.09 & 16.93 & 15.23 & 119.60 & 28.63 & 29.41 \\  
    0.5 & 0.9 & 0.1 & 14.88 & 13.59 & 16.91 & 12.01 & 86.88 & 23.62 & 22.69 \\ 
 0.9 & 0.9 & 0.1 & 4.71 & 4.78 & 5.94 & 3.99 & 25.37 & 6.36 & 5.91 \\ 
   \hline
 0.7 & 0.9 & 0.3 & 16.86 & 17.43 & 19.66 & 15.44 & 43.88 & 15.30 & 14.14 \\ 
  0.7 & 0.9 & 0.5 & 26.89 & 26.81 & 29.93 & 24.95 & 36.87 & 14.79 & 13.62 \\ 
  0.7 & 0.9 & 0.7 & 31.86 & 35.98 & 38.64 & 30.36 & 33.58 & 14.35 & 13.65 \\ 
\hline
\end{tabular}
\end{table}
Similarly, Table \ref{table2} shows that when $s_*=0.1$, $I_{L1}$ outperforms 
all three forward regression estimators. However, unlike in the lower dimensional setting illustrated in table \ref{table1}, 
when $\eta_*$ is not sparse, i.e. $s_* \geq .3$, $I_{L1}$ is outperformed by forward regression approaches.
The part oracle method $O_{Y}$ that used the knowledge of $\Sigma_{*YY}^{-1}$ outperformed 
the other two part oracle indirect estimators $O$ and $O_{\Delta}$
when $\rho_{\Delta}=.9$. Also, when $\rho_{\Delta}=.9$, $I_{L1}$ was competitive with the part oracle estimators. 
Taken together, the results in Tables \ref{table1} and \ref{table2} 
suggest that when $\eta_*$ is very sparse, our proposed indirect estimator $I_{L1}$ 
may perform nearly as well as the part oracle indirect 
estimators and the forward regression estimators.

\subsection{Reduced rank inverse regression simulation}  \label{redranksim1}
We compared the performance of the following indirect reduced rank estimators of $\beta_{*}$: 
\begin{list}{}{}
\item[$I_{ML}^{(r)}$.] This is the likelihood-based indirect example estimator 1 proposed in Section \ref{redranksec}.
\item[$I^{(r)}$.] This is the indirect example estimator 2 proposed in Section \ref{redranksec}, which uses
sparse estimators of $\Sigma_{*YY}^{-1}$ and $\Delta_{*}^{-1}$ in \eqref{inv_estimator}.
\item[$O^{(r)}$.] This is a part oracle indirect estimator defined by \eqref{inv_estimator} 
with $\hat{\eta}$ defined by \eqref{inv.RR}, 
$\hat{\Delta}^{-1} = \Delta_*^{-1}$, and $\hat{\Sigma}_{YY}^{-1} = \Sigma_{*YY}^{-1}$.
\item[$O^{(r)}_{\Delta}$.] 
This is a part oracle indirect estimator defined by \eqref{inv_estimator} 
with $\hat{\eta}$ defined by \eqref{inv.RR}, 
$\hat{\Delta}^{-1}$ defined by \eqref{glasso}, 
and $\hat{\Sigma}_{YY}^{-1} = \Sigma_{*YY}^{-1}$.  
\item[$O^{(r)}_{Y}$.] 
This is a part oracle indirect estimator defined by \eqref{inv_estimator} 
with $\hat{\eta}$ defined by \eqref{inv.RR}, 
$\hat{\Delta}^{-1} = \Delta_*^{-1}$, 
$\hat{\Delta}^{-1}$ defined by \eqref{glasso}, 
and $\hat{\Sigma}_{YY}^{-1} $ defined by \eqref{glasso}.  
\end{list} 
We compared these indirect estimators to the following forward reduced rank regression estimator: 
\begin{list}{}{}
\item[RR.] This is the likelihood based reduced rank regression \citep{izenman1975reduced, reinsel1998multivariate}.  
The estimator of $\beta_{*}$ and the estimator of the forward 
regression's error precision matrix $\Sigma_{*E}^{-1}$ are defined by
\begin{align*} 
(\hat{\beta}^{(r)}, \hat\Sigma_{E}^{-1(r)}) 
= \argmin_{(\beta, \Omega) \in\mathbb{R}^{p\times q}\times \mathbb{S}_{+}^q} & 
\left[
n^{-1}{\rm tr}\left\{(\mathbb{Y} - \mathbb{X}\beta)'(\mathbb{Y} - \mathbb{X}\beta)\Omega\right\}
 - \log {\rm det} (\Omega )
\right] \\
& {\rm subject} \ {\rm to} \ {\rm rank}(\beta) = r.
\end{align*}
\end{list}
We selected the rank parameter $r$ for uses of \eqref{inv.RR} with 5-fold cross-validation, minimizing validation prediction
error on the inverse regression.  The rank parameter for RR was selected with 5-fold cross-validation, 
minimizing validation prediction error on the forward regression.   
We selected tuning parameters for uses of \eqref{glasso} with \eqref{deltatune}.
The candidate set of tuning parameters was  $\left\{10^{-8}, 10^{-7.5}, \dots, 10^{7.5},10^8\right\}$.

For 50 independent replications, we generated a realization of $n$ independent copies 
of $(X', Y')'$ where $Y \sim N_q(0, \Sigma_{*YY})$ and $(X|Y=y) \sim N_p(\eta_*^\prime y, \Delta_*)$. 
The $(i,j)$th entry of $\Sigma_{*YY}$ was set to $\rho_Y^{|i-j|}$ and the $(i,j)$th entry of $\Delta_*$ 
was set to $\rho_{\Delta}^{|i-j|}$. After specifying $r_* \leq \min(p, q)$, we set $\eta_* = PQ$,
where $P \in \mathbb{R}^{q \times r_*}$ and $Q \in \mathbb{R}^{r_* \times p}$ had entries
independently drawn from $N(0,1)$ so that $r_* = \text{rank}(\eta_*) = \text{rank}(\beta_*)$.
As we did in the simulation in Section \ref{sparsesim}, we measured 
performance with model error.

\begin{table}[t!]
\centering
\caption{\label{table.rr.1}
 Averages of model error from 50 replications when $n=100,$ $p=20$, and $q=20$. 
All standard errors were less than or equal to 0.05.}
\begin{tabular}{ r r r |ccccccc}
  \hline
  $\rho_{Y}$ &  $\rho_{\Delta}$ & $r_*$ & $I^{(r)}$ & $O^{(r)}$ & $O^{(r)}_{\Delta}$ & $O^{(r)}_{Y}$ & $I_{ML}^{(r)}$ &  OLS & RR \\ 
 \hline
 0.7 & 0.0 & 10 & 0.33 & 0.04 & 0.86 &  0.75 & 0.64 & 1.38 & 0.64  \\ 
  0.7 & 0.5 & 10 & 0.34 & 0.04 & 0.86 & 0.74 & 0.60 & 1.31 & 0.60 \\ 
  0.7 & 0.7 & 10 & 0.31 & 0.03 & 0.86 & 0.80 & 0.62 & 1.32 & 0.61 \\ 
 0.7 & 0.9 & 10 & 0.31 & 0.02  & 0.85 & 0.88 & 0.60 & 1.30 & 0.61 \\ 
  \hline
 0.0 & 0.9 & 10 & 0.15 &0.03  &1.00 & 1.77 & 1.22 & 2.61 & 1.21 \\ 
  0.5 & 0.9 & 10 & 0.42 & 0.01  &1.11 & 1.36 & 0.90 & 1.97 & 0.89  \\ 
  0.9 & 0.9 & 10 & 0.12 & 0.01 & 0.32 & 0.30 & 0.22 & 0.46 & 0.22  \\ 
   \hline
  0.7 & 0.9 & 4 & 0.35 & 0.02 &1.73 & 2.61 & 0.49 & 3.12 & 0.49  \\ 
  0.7 & 0.9 & 8 & 0.35 &0.01  &  1.15 & 1.33 & 0.68 & 1.73 & 0.65  \\ 
 0.7 & 0.9 & 12 & 0.31 & 0.04 & 0.64 & 0.59 & 0.55 & 0.96 & 0.53  \\ 
   0.7 & 0.9 & 16 & 0.25 & 0.08 &  0.30 & 0.20 & 0.44 & 0.50 & 0.42  \\ 
\hline
\end{tabular}
\end{table}

We report the model errors, averaged over the 50 independent replications, in Table \ref{table.rr.1}. 
Under every setting, $I^{(r)}$ outperformed all non-oracle competitors. When $r_* \leq 12$, 
$I^{(r)}$ outperformed both $O_\Delta^{(r)}$ and $O_Y^{(r)}$, which suggests that shrinkage estimation of
$\Delta_*^{-1}$ and $\Sigma_{*YY}^{-1}$ was helpful. In each setting, $I_{ML}^{(r)}$ performed similarly
 to $RR$ even though they are estimating parameters of different condition distributions.  

\subsection{Reduced rank forward regression simulation}
Our simulation studies in the previous sections used inverse regression data generating models.
In this section, we compare the estimators from Section \ref{redranksim1}
using a forward regression data generating model. 

For 50 independent replications, we generated a realization of $n$ independent copies of 
$(X^\prime, Y^\prime)^\prime$ where $X \sim N_p(0, \Sigma_{*XX})$ and 
$(Y|X=x) \sim N_q(\beta_*^\prime x, \Sigma_{*E})$. 
The $(i,j)$th entry of $\Sigma_{*XX}$ was set to $\rho_X^{|i-j|}$ and 
the $(i,j)$th entry of $\Sigma_{*E}$ was set to $\rho_{E}^{|i-j|}$. 
After specifying $r_* \leq \min(p, q)$, we set $\beta_* = ZQ$ where 
$Z \in \mathbb{R}^{p \times r_*}$ had entries independently drawn from $N(0,1)$ 
and $Q \in \mathbb{R}^{r_* \times q}$ had entries independently drawn from 
$\text{Uniform}(-1/4,1/4)$. In this data generating model, 
neither $\Delta_*^{-1}$ nor $\Sigma_{*YY}^{-1}$ had entries equal to zero.

\begin{table}[ht!]
\centering
\caption{\label{for.table}
Averages of model error from 50 replications when $n=100,$ $p=20$, and $q=20$. 
All standard errors were less than
or equal to 0.21.}
\begin{tabular}{ r r r |ccccccc}
  \hline
  $\rho_{X}$ &  $\rho_{E}$ & $r_*$ & $I^{(r)}$ & $O^{(r)}$&  $O^{(r)}_{\Delta}$ & $O^{(r)}_{Y}$ & $I_{ML}^{(r)}$ &  OLS & RR \\ 
 \hline
0.0 & 0.9 & 10 & 2.79 & 0.54 & 4.27 & 5.05 & 2.48 & 4.99 & 2.82 \\
0.5 & 0.9 & 10 & 2.90 & 0.47 & 5.36 & 5.94 & 2.73 & 5.00 & 2.89  \\ 
0.7 & 0.9 & 10 & 2.97 & 0.51&  4.64 & 5.03 & 2.71 & 4.93 & 2.76 \\ 
0.9 & 0.9 & 10 & 2.84 & 0.73 & 3.78 & 4.16 & 2.67 & 5.19 & 2.73  \\ 
   \hline
0.7 & 0.0 & 10 & 4.66 & 1.92   & 3.59 & 5.88 & 4.53 & 5.11 & 4.34 \\ 
0.7 & 0.5 & 10 & 4.27 &1.65  & 3.88 & 5.51 & 3.99 & 5.06 & 3.97  \\
0.7 & 0.7 & 10 & 3.55 & 1.26&  3.99 & 5.29 & 3.43 & 5.00 & 3.44  \\ 
   \hline
0.7 & 0.9 & 4 & 1.27 & 0.08  & 3.84 & 4.71 & 0.95 & 5.00 & 1.11\\
0.7 & 0.9 & 8 & 2.39 & 0.36 & 4.15 & 5.15 & 2.05 & 4.81 & 2.22 \\ 
0.7 & 0.9 & 12 & 3.58 & 0.79 & 4.44 & 5.21 & 3.20 & 5.15 & 3.27  \\ 
 0.7 & 0.9 & 16 & 4.53 & 1.29 & 4.62 & 4.42 & 4.33 & 5.11 & 4.38 \\ 
   \hline
\end{tabular}
\end{table}

The model errors, averaged over the 50 replications, are reported in Table \ref{for.table}. 
 Both $I^{(r)}$ and $I_{ML}^{(r)}$ were competitive with RR in most settings.
 Although neither $\Delta_*^{-1}$ nor $\Sigma_{*YY}^{-1}$ were sparse,
 we again see that $I^{(r)}$ generally outperforms $O_{Y}^{(r)}$ and $O_{\Delta}^{(r)}$, 
 both of which use some oracle information. These results indicate that shrinkage estimators 
 of $\Delta_*^{-1}$ and $\Sigma_{*YY}^{-1}$ in \eqref{inv_estimator} are helpful when neither is sparse. 

\section{Tobacco chemical composition data example}
As an example application, we use the chemical composition of tobacco leaves data from 
\cite{anderson1952statistical} and \cite{izenman2009modern}. These data have $n=25$ cases, $p=6$
predictors, and $q=3$ responses. 
The names of the predictors, taken from page 183 of \citet{izenman2009modern}, are percent nitrogen,
 percent chlorine,
 percent potassium,
 percent phosphorus, 
 percent calcium,
and percent magnesium.
The names of the response variables, also taken from page 183 of \citet{izenman2009modern}, are
rate of cigarette burn in inches per 1,000 seconds,
 percent sugar in the leaf, and
percent nicotine in the leaf. 
In these data, it may inappropriate to assume that $\Delta_*^{-1}$ is sparse. 
For this reason, we consider another example indirect estimator of $\beta_{*}$
called $I_{L2}$ that estimates $\eta_*$ with \eqref{eq:unilasso}, estimates $\Sigma_{*YY}^{-1}$ with 
\eqref{glasso} using $S =\mathbb{Y}^\prime \mathbb{Y}/n$, and estimates $\Delta_*^{-1}$ with
\begin{equation}\label{l2prec}
\argmin_{\Omega \in \mathbb{S}^P_{+}}\left\{\text{tr}(\Omega S) - \text{log } \text{det}\left( \Omega \right) + \gamma \sum_{j,k}|\omega_{jk}|^2 \right\}, 
\end{equation}
where $S = (\mathbb{Y} - \mathbb{X}\hat{\eta}^{L1})^\prime(\mathbb{Y} - \mathbb{X}\hat{\eta}^{L1})/n$. 
We compute \eqref{l2prec} with the closed form solution derived by \cite{witten2009covariance}.  
As before, we select $\gamma$ from $\{10^8, 10^{-7.5},\ldots, 10^{7.5}, 10^8\}$ using (\ref{deltatune}). 
We also consider the forward regression estimators RR, $\ell_2$, and OLS defined in Section \ref{sparsesim} and Section \ref{redranksim1}. 
We introduce another competitor $\ell_1$, defined as
$$
\argmin_{\beta \in \mathbb{R}^{p \times q}}
\left\{ \|\mathbb{Y} - \mathbb{X}\beta\|_F^2 + \sum_{j=1}^q \lambda_j \sum_{l=1}^p |\beta_{jl}|\right\}, 
$$
which is equivalent to performing $q$ separate lasso regressions \citep{tibs96}. 
We randomly split the data into a 40\% test set and  60\% training set in each of 500 replications 
and we measured the squared prediction error on the test set. 
All tuning parameters were chosen from $\{10^8, 10^{-7.5},\ldots, 10^{7.5}, 10^8\}$
by 5-fold cross validation. 

\begin{table}[ht!]
\caption{\label{tobacco.table} Averages of squared prediction error, with standard errors in parenthesis, 
for each response variable from 500 replications.}
\centering
\begin{tabular}{c|cccccccc}
  \hline
& $I^{(r)}$ & $I_{L1}$ & $I_{L2}$ & OLS & RR & $\ell_2$  & $\ell_1$ \\ 
  \hline
  Rate of burn & 1.19 & 1.33 & 0.45 & 2.96  & 2.17 & 0.57 & 1.55 \\ 
  & (0.08) & (0.10) & (0.03) & (0.15)  & (0.15) & (0.07) & (0.13) \\ 
  Percent sugar & 442.38 & 347.76 & 235.55 & 799.03 &  605.30 & 365.13 & 583.98 \\ 
   & (17.97) & (21.31) & (6.31) & (29.45)  & (25.52) & (20.68) & (24.36) \\ 
  Percent nicotene & 2.55 & 2.54 & 0.79 & 5.65  & 4.59 & 0.81 & 2.82 \\ 
   & (0.29) & (0.30) & (0.05) & (0.41) & (0.31) & (0.21) & (0.29) \\ 
\hline
\end{tabular}
\end{table}
Table \ref{tobacco.table} shows squared prediction errors, averaged over
the 10 predictions and the 500 replications.  
These results indicate that $I_{L2}$ outperforms all the competitors we considered. 
Also, $I_{L1}$ was outperformed by $\ell_2$, 
but was competitive with separate lasso regressions. 
Reduced rank regression was not competitive with the proposed indirect estimators. 

\section*{Acknowledgment}
We thank Liliana Forzani for an important discussion.  This research
is supported in part by a grant from the U.S. National Science Foundation.

\appendix
\section{Appendix}\label{app}
\subsection{Proofs}\label{proofs}

\begin{proof}[Proof of Proposition \ref{main.prop}]
Since $\Sigma_{*}$ is positive definite, we apply the partitioned inverse formula to obtain that
\begin{align*}
\Sigma_*^{-1}  
& = 
\left(
\begin{array}{c c}
 \Sigma_{*XX} & \Sigma_{*XY}  \\
 \Sigma_{*XY}' & \Sigma_{*YY} \\
\end{array} \right)^{-1}
 =
\left(
\begin{array}{c c}
 \Delta^{-1}_* & - \beta_* \Sigma_{*E}^{-1}  \\
 -\eta_*\Delta^{-1}_* & \Sigma_{*E}^{-1} \\
\end{array} \right),
\end{align*}
where $\Delta_{*} = \Sigma_{*XX} - \Sigma_{*XY}\Sigma_{*YY}^{-1}\Sigma_{*XY}'$
and $\Sigma_{*E} = \Sigma_{*YY} - \Sigma_{*XY}'\Sigma_{*XX}^{-1}\Sigma_{*XY}$.
The symmetry of $\Sigma_*^{-1}$ implies that $\beta_* \Sigma_{*E}^{-1}  = (\eta_*\Delta^{-1}_*)'$
so 
\begin{equation} \label{betapart1}
\beta_{*} = \Delta^{-1}_* \eta_{*}' \Sigma_{*E}.  
\end{equation}
Using the Woodbury identity, 
\begin{align}
\Sigma_{*E}^{-1} &= (\Sigma_{*YY} - \Sigma^\prime_{*XY}\Sigma_{*XX}^{-1} \Sigma_{*XY})^{-1} \nonumber \\
& = \Sigma_{*YY}^{-1} +  \Sigma_{*YY}^{-1}\Sigma^\prime_{*XY}\left(\Sigma_{*XX}^{-1} 
- \Sigma_{*XY}\Sigma_{*YY}^{-1} \Sigma^\prime_{*XY}\right)^{-1} \Sigma_{XY}\Sigma_{*YY}^{-1} \nonumber \\
& = \Sigma_{*YY}^{-1} + \eta_*\Delta_*^{-1}\eta_{*}'. \label{exprSigmaEinv}
\end{align}
Using the inverse of the expression above in \eqref{betapart1} establishes the result.
\end{proof}

In our proof of Proposition \ref{Proposition1}, we use the 
matrix inequality
\begin{align}\label{identity1}
\|A^{(1)}A^{(2)}A^{(3)} - B^{(1)}B^{(2)}B^{(3)}\| & \leq \sum_{j=1}^3\|A^{(j)} - B^{(j)}\| 
\prod_{k \neq j} \|B^{(k)}\| \notag \\
& + \sum_{j=1}^3 \|B^{(j)}\| \prod_{k \neq j}\|A^{(k)} - B^{(k)}\| + \prod_{j=1}^3\|A^{(j)} - B^{(j)}\|.
\end{align}
\citet{bl06} used \eqref{identity1} to prove their Theorem 3.

\begin{proof}[Proof of Proposition \ref{Proposition1}] 
From \eqref{exprSigmaEinv} in the proof of Proposition \ref{main.prop}, 
$\Sigma_{*E}^{-1} = \Sigma_{*YY}^{-1} + \eta_*\Delta_*^{-1}\eta_*'$.
Define $\hat{\Sigma}_E^{-1} = \hat{\Sigma}_{YY}^{-1} + \hat{\eta}\hat{\Delta}^{-1}\hat{\eta}'$.
Applying \eqref{identity1}, 
\begin{align}
\|\hat\beta - \beta_*\| 
= &  \|\hat{\Delta}^{-1}\hat{\eta}'\hat\Sigma_{E} - \Delta_*^{-1}\eta_*'
 \Sigma_{*E}\|\nonumber \\
\leq &  
\|\hat{\Delta}^{-1} - \Delta_*^{-1}\| \cdot \|\eta_*\| \cdot \|\Sigma_{*E}\|
+ \|\hat{\eta} - \eta_*\| \cdot \|\Delta_*^{-1}\| \cdot \|\Sigma_{*E}\| 
+ \|\hat\Sigma_{E} - \Sigma_{*E}\| \cdot \|\Delta_*^{-1}\| \cdot \|\eta_*\| \nonumber \\
& + \|\Delta_*^{-1}\| \cdot \|\hat{\eta} - \eta_*\| \cdot \|\hat\Sigma_{E} - \Sigma_{*E}\|
 + \|\eta_*\| \cdot \|\hat{\Delta}^{-1} - \Delta_*^{-1}\| \cdot\|\hat\Sigma_{E} - \Sigma_{*E}\| \nonumber\\
& + \|\Sigma_{*E}\| \cdot \|\hat{\Delta}^{-1} - \Delta_*^{-1}\| \cdot \|\hat{\eta} - \eta_*\|
 + \|\hat{\eta} - \eta_*\| \cdot \|\hat{\Delta}^{-1} - \Delta_*^{-1}\|  \cdot
\|\hat\Sigma_{E} - \Sigma_{*E}\|. \label{expandedform} 
\end{align}
We will show that the third term in \eqref{expandedform} dominates the others.  We continue by deriving its bound.
Employing a matrix identity used by \citet{cai08}, we write
$\hat\Sigma_{E} - \Sigma_{*E} = \Sigma_{*E} (\Sigma_{*E}^{-1} - \hat\Sigma_E^{-1})\hat{\Sigma}_E$, 
so
\begin{equation} \label{sigma.e.inv}
\|\hat\Sigma_{E} - \Sigma_{*E}\| \leq 
\|\hat{\Sigma}_E\| \cdot \|\Sigma_{*E}\| \cdot \|\hat{\Sigma}_E^{-1} - \Sigma_{*E}^{-1}\|.
\end{equation}
Using the triangle inequality and \eqref{identity1}, 
\begin{align}
\|\hat{\Sigma}_E^{-1} - \Sigma_{*E}^{-1} \| 
& \leq \|\hat{\Sigma}_{YY}^{-1} - \Sigma_{*YY}^{-1}\| +  
\|\hat{\eta} \hat{\Delta}^{-1}\hat{\eta}' - \eta_*\Delta_*^{-1}\eta_*'\|\notag \\
& \leq  \|\hat{\Sigma}_{YY}^{-1} - \Sigma_{*YY}^{-1}\| + 
2\|\hat{\eta} - \eta_*\| \cdot \|\Delta_*^{-1}\| \cdot \|\eta_*\| + 
\|\hat{\Delta}^{-1} - \Delta_*^{-1}\| \cdot \|\eta_*\|^2 \notag \\ &+ 2\|\eta_*\| \cdot
\|\hat{\Delta}^{-1} - \Delta_*^{-1}\| \cdot \|\hat{\eta} - \eta_*\| +  
\|\Delta_*^{-1}\| \cdot \|\hat{\eta} - \eta_*\|^2  +  \|\hat{\eta} - \eta_*\|^2 
\|\hat{\Delta}^{-1} - \Delta_*^{-1}\| \notag \\
& = O_P\left(c_n + a_n \|\eta_*\| \cdot
\|\Delta_*^{-1}\| + b_n\|\eta_*\|^2\right). \label{d-d_*}
\end{align} 
Since $\varphi_{\min}(\Sigma_{*YY}^{-1})\geq K$ and $\Delta_{*}^{-1}$ is positive definite,
Weyl's eigenvalue inequality implies that 
$\varphi_{\min}(\Sigma_{*E}^{-1})\geq K$  so 
\begin{equation} \label{sigma.e}
\|\Sigma_{*E}\| = \varphi_{\min}^{-1}(\Sigma_{*E}^{-1}) \leq 1/K.  
\end{equation}
Also, 
\begin{equation}\label{d_inverse}
\|\hat{\Sigma}_E\| = \varphi_{\min}^{-1}(\hat{\Sigma}_E^{-1}) = O_P(1).
\end{equation}
because $\varphi_{\min}(\Sigma_{*E}^{-1})\geq K$, 
$\hat{\Sigma}_E$ is positive definite,
and $a_n \|\eta_*\| \cdot \|\Delta_*^{-1}\| 
+ b_n  \|\eta_*\|^2  
+ c_n=o(1)$ in \eqref{d-d_*}.
Using \eqref{d-d_*}, \eqref{sigma.e}, and \eqref{d_inverse}, in \eqref{sigma.e.inv},
$$
\|\hat\Sigma_{E} - \Sigma_{*E}\| = O_P\left(a_n \|\eta_*\| \cdot
\|\Delta_*^{-1}\| + b_n\|\eta_*\|^2 + c_n\right).
$$
We then see that the third term in \eqref{expandedform} dominates and 
\begin{align*}
\|\hat\beta - \beta_*\| & = O_P\left\{ \left(a_n \|\eta_*\| \cdot
\|\Delta_*^{-1}\| + b_n\|\eta_*\|^2 + c_n\right) \|\eta_*\|\|\Delta_*^{-1}\| \right\}\\
& = O_P \left(
a_n \|\eta_*\|^2 \|\Delta_*^{-1}\|^2 
+ b_n  \|\eta_*\|^3\|\Delta_*^{-1}\|  
+ c_n \|\eta_*\| \cdot \|\Delta_*^{-1}\|   \right).
\end{align*}
\end{proof}

\bibliography{Indirect_multiple_response-2015-07-16}

\end{document}